\def\x{\mathbf{x}}
\def\z{\mathbf{z}}
\def\L{\mathbb{L}}
\def\H{\mathbb{H}}
\newtheorem{thm}{Theorem}
\newtheorem{lem}{Lemma}
\title{The critical pulling force for self-avoiding walks}
\author{Nicholas R. Beaton\thanks{n.beaton@usask.ca} \\ \ \\ Department of Mathematics and Statistics \\ University of Saskatchewan, Saskatoon, Canada}
\begin{document}
\maketitle

\abstract{Self-avoiding walks are a simple and well-known model of long, flexible polymers in a good solvent. Polymers being pulled away from a surface by an external agent can be modelled with self-avoiding walks in a half-space, with a Boltzmann weight $y = e^f$ associated with the pulling force. This model is known to have a critical point at a certain value $y_c$ of this Boltzmann weight, which is the location of a transition between the so-called free and ballistic phases. The value $y_c=1$ has been conjectured by several authors using numerical estimates. We provide a relatively simple proof of this result, and show that further properties of the free energy of this system can be determined by re-interpreting existing results about the two-point function of self-avoiding walks.}

\section{Introduction}\label{sec:intro}
Self-avoiding walks (SAWs) were first considered as a model of long linear polymers by Orr~\cite{Orr1947Statistical} and Flory~\cite{Flory1949Configuration}. Early studies into using SAWS to model polymers at an impenetrable surface were conducted by Silberberg~\cite{Silberberg1966Adsorption} and Clayfield and Lumb~\cite{Clayfield1966Theoretical}, and some important rigorous results were derived by Whittington~\cite{Whittington1975Selfavoiding}. The model can be enhanced by accounting for attractive or repulsive interactions with the surface~\cite{Hammersley1982Selfavoiding, Hegger1994Chain, vanRensburg2004Multiple}, and/or a force applied to part of the polymer towards or away from the surface~\cite{Guttmann2014Pulling, vanRensburg2009Thermodynamics, vanRensburg2013Adsorbed, Krawczyk2005Pulling, Skvortsov2012Mechanical}.

Here we consider the case of polymers terminally attached to an impenetrable surface, with no interactions (attractive or repulsive) between the polymer and the surface, but with a force perpendicular to the surface applied to the non-attached end of the polymer. The model will comprise SAWs in a half-space of the $d$-dimensional hypercubic lattice which start at a fixed vertex on the $(d-1)$-dimensional boundary of the half-space. The force will be modelled by associating a fugacity (Boltzmann weight) $y$ with the height of the final vertex of a walk above the surface. 

When $y$ is small, the partition function (to be defined precisely in the next section) is dominated by walks which end close to the surface; when $y$ is large, it is dominated by those which end far away from the surface. These two behaviours characterize the two phases of the model: the \emph{free} phase for small $y$ and the \emph{ballistic} phase for large $y$. There is a \emph{critical fugacity} $y_c$ which separates these two phases. In this paper we prove that $y_c=1$ for all dimensions $d\geq2$. 

This result was conjectured by Janse van Rensburg, Orlandini, Tesi and Whittington~\cite{vanRensburg2009Thermodynamics} based on numerical estimates derived from Monte Carlo simulations. Guttmann, Jensen and Whittington~\cite{Guttmann2014Pulling} later used exact enumeration and series analysis techniques to draw the same conclusion. It should also be possible to derive this result from probabilistic work of Ioffe and Velenik~\cite{Ioffe2008Ballistic, Velenik2014Private}. The method presented in this paper is considerably simpler, however.

In Section~\ref{sec:themodel} we precisely define the model of interest, and give some fundamental results about its behaviour in the thermodynamic limit. In Section~\ref{sec:easyproof} we use well-known results regarding the factorization of self-avoiding walks and bridges into irreducible bridges to provide an elementary proof that $y_c=1$. Finally in Section~\ref{sec:from_ms}, we demonstrate that this result, as well as a formulation for the value of the free energy in terms of the generating function of irreducible bridges, can be easily derived from existing results appearing in~\cite{Madras1993SelfAvoiding} on the decay of the two-point function of SAWs.

Though the proof of Section~\ref{sec:easyproof} is very simple, and the re-framing of the results of~\cite{Madras1993SelfAvoiding} to the model of pulled walks in Section~\ref{sec:from_ms} is also straightforward, this problem has recently been of interest to a number of authors and so it seems important that the result be published. It is also one of only a handful of non-trivial problems in statistical mechanics for which the exact location of a phase transition can be proved.

\section{The model}\label{sec:themodel}

Let $\L = \mathbb{Z}^d$ be the $d$-dimensional hypercubic lattice with coordinate system $\left(\x^{(1)},\x^{(2)},\ldots,\x^{(d)}\right)$. For brevity we will often denote $\x^{(d)}$ by $\z$, but this should not be taken to mean that we are working in three dimensions. Let $\H = \mathbb{Z}^{d-1}\times\mathbb{Z}^{\geq 0}$ be a half-space of $\L$. Let $c_n$ be the number of $n$-step self-avoiding walks on $\L$, starting at the origin $(0,0,\ldots)$, and let $u_n\leq c_n$ be the number of those walks which also stay entirely in $\H$.

It is a famous result of Hammersley~\cite{Hammersley1957Percolation} that
the limit
\begin{equation}\label{eqn:connective_const}
\kappa = \lim_{n\to\infty} \frac1n \log c_n
\end{equation}
exists and is finite. The constant $\kappa$ is generally referred to as the \emph{connective constant} of the lattice. The \emph{growth constant} (or \emph{rate}) is then defined to be $\mu=e^\kappa$, and it follows that
\[c_n = e^{o(n)}\mu^n.\]
The exact form of the subexponential term is not rigorously known for $d\leq 4$, though it is generally expected to have a power-law form, so that $c_n \sim An^{\gamma-1}\mu^n$ for some constants $A$ and $\gamma$, with a possible logarithmic correction term when $d=4$. It is also well-known (for example, see~\cite{Hammersley1982Selfavoiding}) that restricting walks to a half-space does not change the growth rate, so that
\begin{equation}\label{eqn:halfplane_growthrate}
\lim_{n\to\infty} \frac1n \log u_n = \kappa.
\end{equation}

\begin{figure}
\centering
\begin{picture}(250,120)
\put(0,0){\includegraphics[width=250pt]{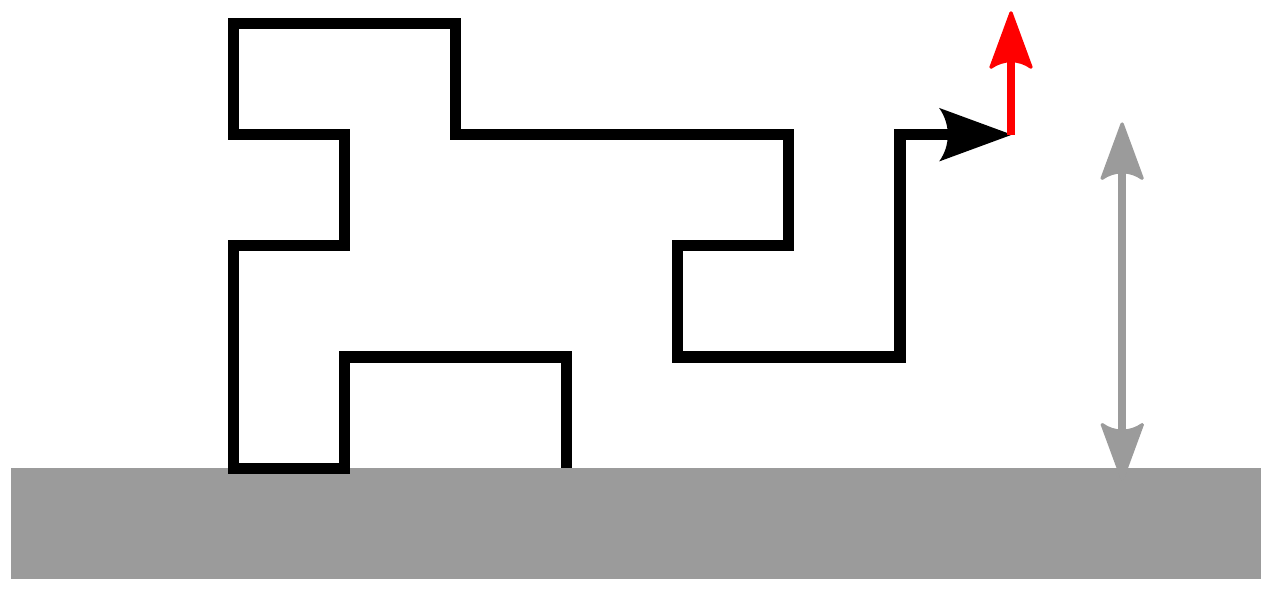}}
\put(225,53){$h$}
\end{picture}
\caption{One of the objects described by the model: a self-avoiding walk in the 2-dimensional upper half-space, of length $n=25$ and height $h=3$. The red arrow indicates the pulling force being applied to the unattached end of the walk.}
\label{fig:saw_surface_height}
\end{figure}

We define the \emph{height} $h$ of a walk $\gamma$ of length $n$ to be the difference between the $\mathbf z$-coordinates of its first and last vertices. That is, if vertices $\gamma_i$ have coordinates $(\gamma^{(1)}_i,\ldots,\gamma^{(d)}_i)$, then $h=\gamma^{(d)}_n - \gamma^{(d)}_0$. Let $c_n(h)$ (resp.~$u_n(h)$) be the number of $n$-step SAWs in $\L$ (resp.~$\H$) which begin at the origin and have height $h$. See Figure~\ref{fig:saw_surface_height} for an example.

To model a pulling force being applied to the unattached end of a walk in $\H$, we will associate a (real) \emph{fugacity} (Boltzmann weight) $y\geq0$ with this height, and accordingly define the \emph{partition function} of walks of length $n$ to be
\[U_n(y) = \sum_{h\geq0} u_n(h) y^h.\]
It is proved in~\cite{vanRensburg2009Thermodynamics} that the free energy
\[\lambda(y) = \lim_{n\to\infty} \frac1n \log U_n(y)\]
exists. It is a convex function of $\log y$ \cite{vanRensburg2013Adsorbed}, and is thus continuous. Since $U_n(y)$ is clearly a non-decreasing function of $y$, the same also holds for $\lambda(y)$.

By~\eqref{eqn:halfplane_growthrate}, $\lambda(1) = \kappa$. Since $U_n(0) = u_n(0)$ is the number of half-space SAWs which start and end on the surface, and the growth rate of such objects is the same as that of full- and half-space SAWs~\cite{Hammersley1982Selfavoiding}, we also have $\lambda(0) = \kappa$. Since $\lambda(y)$ is continuous and non-decreasing, it follows that
\begin{equation}\label{eqn:lambda_constant_y01}
\lambda(y) = \kappa \qquad \text{for } 0\leq y \leq 1.
\end{equation}

At the opposite extreme from loops are half-space SAWs with the maximum possible height. For each $n$ there is a walk with length and height $n$, so $U_n(y) \geq y^n$. Thus
\begin{equation}\label{eqn:lambda_greater_log}
\lambda(y) \geq \lim_{n\to\infty} \frac1n \log y^n = \log y.
\end{equation}
Together,~\eqref{eqn:lambda_constant_y01} and~\eqref{eqn:lambda_greater_log} imply that there is a \emph{critical value} $y_c\geq1$ of $y$ where $\lambda$ is non-analytic, and
\[\lambda(y) \begin{cases} = \kappa & 0\leq y\leq y_c \\ > \kappa & y > y_c.\end{cases}\]

It is the value of this critical fugacity $y_c$ which is the main result of this article.
\begin{thm}\label{thm:yc1}
The critical fugacity $y_c$ for pulled SAWs in a half-space of the $d$-dimensional hypercubic lattice is $y_c=1$.
\end{thm}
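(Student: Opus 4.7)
Section~\ref{sec:themodel} already establishes $y_c \geq 1$, so the content of the theorem is the reverse inequality; equivalently, $\lambda(y) > \kappa$ for every $y > 1$. The plan is to prove this via the classical decomposition of bridges into irreducible bridges, combined with the Hammersley--Welsh relation between half-space walks and bridges.

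Call a SAW from the origin of length $n$ a \emph{bridge} if its first vertex strictly minimises and its last vertex strictly maximises the $\z$-coordinate along the walk, and call a bridge \emph{irreducible} if it is not a concatenation of two shorter bridges. Let $b_n(h)$ and $a_n(h)$ count bridges and irreducible bridges of length $n$ and height $h$ respectively, set $B_n(y) = \sum_h b_n(h) y^h$ and $A_n(y) = \sum_h a_n(h) y^h$, and form the bivariate generating functions $B(x,y) = \sum_{n \geq 0} B_n(y) x^n$ and $a(x,y) = \sum_{n \geq 1} A_n(y) x^n$. Two well-known ingredients enter the proof. First, unique factorisation of bridges into irreducible pieces gives
\[B(x,y) = \frac{1}{1-a(x,y)}.\]
Second, Hammersley--Welsh unfolding, applied with the $y^h$ weight tracked through the construction, yields $U_n(y) \leq e^{o(n)}B_n(y)$; combining this with the trivial $B_n(y) \leq U_n(y)$ shows that the bridge free energy $\beta(y) := \lim_n n^{-1}\log B_n(y)$ exists and agrees with $\lambda(y)$. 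In particular the radius of convergence of $B(\cdot, y)$ equals $e^{-\lambda(y)}$, and at $y = 1$ this radius is $1/\mu$.

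The crucial input at this point is the classical identity $a(1/\mu, 1) = 1$ for SAWs on the hypercubic lattice. Granting it, the proof concludes quickly: for every $y > 1$ and every $n \geq 1$ one has $A_n(y) > A_n(1)$ strictly, because irreducible bridges have height at least $1$ and at least one such bridge exists for each length (for instance the straight vertical one), so $a(1/\mu, y) > a(1/\mu, 1) = 1$. Since $x \mapsto a(x,y)$ is continuous, non-decreasing, and vanishes at $x = 0$, the intermediate value theorem produces some $x^\star(y) \in (0, 1/\mu)$ with $a(x^\star(y), y) = 1$; at this point the denominator of $1/(1-a(x,y))$ vanishes, so the radius of convergence of $B(\cdot, y)$ is at most $x^\star(y) < 1/\mu$, giving $\lambda(y) = \beta(y) \geq -\log x^\star(y) > \kappa$, as required.

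The main obstacle is the identity $a(1/\mu, 1) = 1$. The factorisation makes the inequality $a(1/\mu, 1) \leq 1$ automatic (otherwise $1 - a(x, 1)$ would vanish at some $x < 1/\mu$, shrinking the radius of convergence of $B(\cdot, 1)$ below $1/\mu$), but the matching lower bound requires ruling out the degenerate scenario in which $a(\cdot, 1)$ itself has radius of convergence exactly $1/\mu$ while $a(1/\mu, 1) < 1$; this is where the specific structure of $\mathbb{Z}^d$ enters, via divergence of the bridge generating function at its radius of convergence. Everything else in the argument uses only the two factorisation ingredients and the monotonicity of $a$ in each of its variables.
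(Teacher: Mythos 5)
Your proof is correct and takes essentially the same route as the paper: both rest on the irreducible-bridge factorisation, the identity $a(1/\mu,1)=1$ deduced from the divergence of the bridge generating function at its radius of convergence, the fact that every bridge has height at least one (so $y>1$ forces the root of $1-a(x,y)$ strictly below $1/\mu$), and the inclusion of bridges among half-space walks. Two inessential remarks: the Hammersley--Welsh unfolding is superfluous, since only the trivial inequality $B_n(y)\le U_n(y)$ is used in your final step, and the straight vertical walk of length $n\ge 2$ is not an irreducible bridge (a single upward step already supplies the strict inequality you need).
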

Before commencing the proof, we will give a physical interpretation of the free energy $\lambda(y)$ and its critical point $y_c$.

The \emph{average height} of half-space SAWs of length $n$ (in the absence of any force) is
\[\mathcal{E}_n = \frac{\sum_{h\geq 0} h u_n(h)}{u_n} = \left.\frac{d}{dy} \log U_n(y)\right|_{y=1}.\]
More generally, under the influence of the pulling force modelled by the fugacity $y$, we have the average height
\[\mathcal{E}_n(y) = \frac{\sum_{h\geq 0} h u_n(h)y^h}{\sum_{h\geq0} u_n(h) y^h} = y\frac{d}{dy} \log U_n(y).\]
Dividing by $n$, we then have the \emph{average height per step} $\mathcal{E}^*_n(y) = \mathcal{E}_n(y)/n$. One may consider this quantity in the limit of walk length $n$; the convexity of $\lambda(y)$ then provides, for almost all $y>0$,
\[\mathcal{E}^*(y) = \lim_{n\to\infty} \mathcal{E}_n^*(y) = \lim_{n\to\infty} \frac{y}{n}\frac{d}{dy} \log U_n(y) = y \frac{d}{dy} \lambda(y).\]
At points where $\lambda(y)$ is not differentiable ($y=y_c$ may be one such point), one may consider the left- and right-derivatives, which exist for all $y>0$. See~\cite{vanRensburg2013Adsorbed} for further details.

By convexity, $\lambda(y)$ is strictly increasing for $y>y_c$. Thus
\[\mathcal{E}^*(y) \begin{cases} =0 & 0<y<y_c \\ >0 & y>y_c.\end{cases}\]
So we see that $y_c$ is the location of a transition between two phases of the model: the \emph{free phase}, when the average height of a walk is $o(n)$, and the \emph{ballistic phase}, when the average height is $\Theta(n)$ and a positive fraction of a walk's steps are directly away from the surface. In~\cite{vanRensburg2013Adsorbed} it is proved that $\lambda(y)$ is asymptotic to $\log y$, implying that
\[\mathcal{E}^*(y) \to 1 \qquad \text{as } y\to\infty.\]
That is, in the limit of infinite pulling force, the single walk which steps perpendicularly away from the surface dominates all others.

We finally note that some authors (\cite{vanRensburg2013Adsorbed}, for example) write $y=e^f$, where $f$ is the (reduced) pulling force; so $f>0$ for a force pulling upwards and $f<0$ for a force pulling downwards. The critical fugacity $y_c=1$ then corresponds to the zero-force regime $f=0$.

\section{A simple proof of $y_c=1$}\label{sec:easyproof}

To prove Theorem~\ref{thm:yc1}, we will rephrase the free energy $\lambda(y)$ in terms of the radius of convergence of the generating function of half-space walks, and then relate that generating function to those of two other classes of SAWs: \emph{self-avoiding bridges} and \emph{irreducible bridges}. 

The generating function of half-space SAWs, with $z$ conjugate to length and $y$ conjugate to height, is
\begin{equation}\label{eqn:Uzy_defn}
U(z,y) = \sum_{n,h} u_n(h) z^n y^h = \sum_{n} U_n(y)z^n.
\end{equation}

Viewed as a power series in $z$ with coefficients in $\mathbb{Z}[y]$, $U(z,y)$ has radius of convergence $z_u(y)$. It follows from basic principles of power series that
\[z_u(y) = e^{-\lambda(y)}.\]
Define $z_c = e^{-\kappa}$ to be the radius of convergence of $U(z,1)$. To prove that $y_c=1$, it thus suffices to show that $z_u(y) < z_c$ for $y>1$.

Next, we remind the reader of another class of SAWs. A \emph{bridge} is a walk $\gamma$ of length $n$ whose vertices satisfy $\gamma_0^{(d)} < \gamma_i^{(d)} \leq \gamma_n^{(d)}$ for $i=1,\ldots,n$. That is, the first vertex has strictly minimal $\z$-coordinate, while the last vertex has (weakly) maximal $\z$-coordinate. We note here that we do not consider the empty walk to be a bridge (a convention not necessarily followed by all authors). Let $B(z,y)$ be the generating function of self-avoiding bridges, enumerated by length and height. In keeping with previous terminology, let $z_b(y)$ be the radius of convergence of $B(z,y)$ when viewed as a power series in $z$ with coefficients in $\mathbb{Z}[y]$. 

Two important facts to be noted here are that $z_b(1) = z_u(1) = z_c$, and that $B(z,1)$ diverges as $z\to z_c$ from below (see, for example,~\cite[Cor. 3.1.8]{Madras1993SelfAvoiding}).

While self-avoiding bridges are an interesting object of study in their own right, their primary utility for us here lies in the fact that they can be freely concatenated to form larger self-avoiding bridges. That is, given two bridges $\beta_1$ and $\beta_2$, if $\beta_2$ is translated so that its initial vertex coincides with the final vertex of $\beta_1$, then the resulting object $\beta=\beta_1\circ\beta_2$ will also be a bridge. Note that the length of $\beta$ is just the sum of the lengths of $\beta_1$ and $\beta_2$. This process can be performed in reverse, so that $\beta$ can be decomposed (or factorised) into a concatenation of smaller bridges $\beta_1$ and $\beta_2$.

A bridge which cannot be written as a concatenation of two smaller bridges is said to be \emph{irreducible} (or sometimes \emph{prime}). We define $I(z,y)$ to be the generating function of irreducible self-avoiding bridges, enumerated by length and height. 

Just as positive integers have unique prime factorisations, every bridge can be written uniquely as the concatenation of an ordered sequence of irreducible bridges. Since the lengths of the irreducible components add to give the length of the concatenation, this leads to the well-known identity (see for example~\cite[\S4.2]{Madras1993SelfAvoiding})
\begin{equation}\label{eqn:bridge_fact}
B(z,1) = \frac{I(z,1)}{1-I(z,1)}.
\end{equation}
This can be viewed as an identity of formal power series. Otherwise, it is valid for values of $z$ for which both sides converge; since $I(z,1)$ counts a subset of bridges, it is certainly valid for $|z| < z_b(1) = z_c$.

We now present a generalisation of this identity.
\begin{lem}\label{lem:bridges_irreducible}
The generating functions $B(z,y)$ and $I(z,y)$ satisfy, for $y>0$ and $|z| < z_b(y)$,
\begin{equation}\label{eqn:bridges_irreducible}
B(z,y) = \frac{I(z,y)}{1-I(z,y)}.
\end{equation}
\end{lem}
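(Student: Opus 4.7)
The plan is to adapt the classical factorisation argument that yields (\ref{eqn:bridge_fact}) to the two-variable setting, exploiting the fact that the height of a bridge is additive under the concatenation operation $\circ$ in exactly the same way that the length is.

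First I would verify this additivity. If $\beta_1,\beta_2$ are bridges of lengths $n_1,n_2$ and heights $h_1,h_2$, then in $\beta = \beta_1\circ\beta_2$ the terminal vertex of $\beta_1$ is identified with the initial vertex of the translated copy of $\beta_2$, so the $\z$-coordinate of the final vertex of $\beta$ exceeds that of its initial vertex by precisely $h_1+h_2$. Applied inductively to the unique factorisation of a bridge as an ordered concatenation of irreducible bridges, this yields the formal power series identity
\begin{equation*}
B(z,y) = \sum_{k\geq 1} I(z,y)^k,
\end{equation*}
which is a well-defined element of $\mathbb{Z}[y][[z]]$, since every irreducible bridge has length at least $1$ and hence $I(z,y)$ has zero constant term in $z$.

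Next I would establish the analytic validity of this identity on the stated domain. For real $y>0$ and real $z$ with $0\leq z<z_b(y)$, all series coefficients are non-negative and $B(z,y)$ converges. Because $\sum_{k=1}^{N} I(z,y)^k$ enumerates (with the correct length/height weighting) exactly those bridges that factor into at most $N$ irreducibles, it is bounded above by $B(z,y)$; letting $N\to\infty$ forces $I(z,y)<1$, and summing the resulting geometric series gives $B(z,y) = I(z,y)/(1-I(z,y))$. For complex $z$ with $|z|<z_b(y)$, the bound $|I(z,y)|\leq I(|z|,y)<1$ produced by the triangle inequality yields absolute convergence of $\sum_k I(z,y)^k$ to an analytic function of $z$; this function and $B(z,y)$ share the same Taylor expansion at $z=0$ (namely the formal identity above), so they coincide throughout the disc $|z|<z_b(y)$.

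I do not anticipate a serious obstacle: once height-additivity is recorded, the argument follows the $y=1$ case step for step. The only subtlety is the strict inequality $I(z,y)<1$ on the domain, which is supplied by the partial-sum comparison just described.
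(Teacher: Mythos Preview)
Your proposal is correct and follows essentially the same approach as the paper: both proofs rest on the observation that height is additive under concatenation, so that bridges factoring into exactly $k$ irreducibles contribute $I(z,y)^k$, and summing over $k\geq 1$ yields the identity. Your treatment is in fact slightly more careful than the paper's on the analytic side --- you explicitly justify the strict inequality $I(z,y)<1$ for $0\leq z<z_b(y)$ via the partial-sum comparison, whereas the paper only notes that $I$ has radius of convergence at least $z_b(y)$ without spelling out why the geometric series converges there.
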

\begin{proof}
This essentially just follows from the observation that height, like length, is additive under the concatenation of (irreducible) bridges. That is, if bridges $\beta_1$ and $\beta_2$ have heights $h_1$ and $h_2$, then the concatenation $\beta = \beta_1\circ\beta_2$ has height $h_1+h_2$. So if $\beta_1$ and $\beta_2$ are irreducible and have lengths $n_1$ and $n_2$, then their individual contributions to $I(z,y)$ are $z^{n_1}y^{h_1}$ and $z^{n_2}y^{h_2}$, and the contribution of $\beta=\beta_1\circ\beta_2$ to $B(z,y)$ is $z^{n_1+n_2}y^{h_1+h_2}$. This generalises to concatenations of an arbitrary number $k$ of irreducible bridges. 

It follows that the generating function of bridges which decompose into a concatenation of precisely $k$ irreducible bridges is $I(z,y)^k$. Summing this quantity over $k\geq1$ gives~\eqref{eqn:bridges_irreducible}, for values of $(z,y)$ for which both sides converge. Since $I(z,y)$ counts a strict subset of the objects counted by $B(z,y)$, for any given $y>0$, its radius of convergence must be at least that of $B(z,y)$. So both sides converge for $|z|<z_b(y)$.
\end{proof}

\begin{proof}[Proof of Theorem~\ref{thm:yc1}]
We begin with the observation that every bridge, and thus every irreducible bridge, has height at least 1. Thus every term in the series $I(z,y)$ has a factor of $y^p$ with $p\geq1$. Then for $y\geq1$ and real $z\in(0,z_b(y))$, we have $I(z,y) \geq yI(z,1)$. By~\eqref{eqn:bridges_irreducible}, 
\begin{equation}\label{eqn:BI_yfactor}
B(z,y) \geq \frac{y I(z,1)}{1-y I(z,1)} \qquad \text{for } y\geq 1, z\in(0,z_b(y)).
\end{equation}

Secondly, combining~\eqref{eqn:bridge_fact} with the fact that $B(z,1)$ diverges as $z\to z_c$, we see that $I(z_c,1)=1$. 

Now fix $y>1$ and suppose (for a contradiction) that $z_b(y) = z_b(1) = z_c$. Consider what happens to the RHS of~\eqref{eqn:BI_yfactor} as $z$ increases from 0 to $z_c$. The function $yI(z,1)$ is a convergent power series in this region; it is thus continuous and, because all coefficients are non-negative, it is an increasing function of $z$. Since $I(z_c,1)=1$ and $y>1$, there must be a point strictly smaller than $z_c$ at which $yI(z,1)=1$. The RHS of~\eqref{eqn:BI_yfactor} therefore diverges at this point; hence, so too does $B(z,y)$. But this contradicts the fact that $z_b(y) = z_c$. We thus conclude that $z_b(y) < z_c$ for $y>1$.

Finally, note that bridges are a subset of upper half-plane walks, so $U(z,y) \geq B(z,y)$ and $z_u(y) \leq z_b(y)$. So $z_u(y)<z_c$ for $y>1$, and the proof is complete.
\end{proof}

\section{Pulling SAWs and the two-point function}\label{sec:from_ms}
In fact for $y\geq 1$, the inequality $z_u(y) \leq z_b(y)$ used at the end of the last section is actually an equality. Moreover, $B(z,y)$ diverges at its critical point $z_b(y)$ not just for $y=1$ but for all $y\geq 1$.
Equivalently, we have the following lemma.
\begin{lem}\label{lem:fe_I1}
For $y\geq1$, the free energy $\lambda(y)$ is the largest real root (in the variable $x$) of $I(e^{-x},y)=1$.
\end{lem}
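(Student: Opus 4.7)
Since $I(e^{-x}, y)$ is a power series in $e^{-x}$ with non-negative coefficients, it is strictly decreasing in $x$, so the equation $I(e^{-x}, y) = 1$ has at most one real root, which is automatically the largest. Writing $\lambda(y) = -\log z_u(y)$, the claim reduces to showing $I(z_u(y), y) = 1$ for $y \geq 1$. My plan is to prove this via the two intermediate equalities $z_u(y) = z_b(y)$ and $I(z_b(y), y) = 1$.

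For the second equality I would use Lemma~\ref{lem:bridges_irreducible}. The identity $B = I/(1-I)$ on $[0, z_b(y))$ forces $I(z, y) < 1$ throughout, so by monotonicity the limit $L(y) := \lim_{z \to z_b(y)^-} I(z, y)$ exists in $[0, 1]$. To show $L(y) = 1$, I would combine the pointwise inequality $I(z, y) \geq y \cdot I(z, 1)$ (valid for $y \geq 1$ and $z \geq 0$ since every irreducible bridge has height at least one) with the fact $I(z_c, 1) = 1$ from Section~\ref{sec:easyproof}. Provided $I(\cdot, y)$ converges in a neighbourhood of $z_c$, this forces $I(\cdot, y)$ to attain the value 1 at some $z^* \leq z_c$ inside its domain of convergence, and by monotonicity $z^*$ must coincide with $z_b(y)$.

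For the first equality, $z_u(y) \leq z_b(y)$ was already noted. For the reverse $z_u(y) \geq z_b(y)$, I would adapt the classical Hammersley--Welsh decomposition to the height-weighted setting. Every half-space walk $\omega$ of length $n$ and height $h$ factorises into at most $k = O(\sqrt{n})$ bridge pieces of heights $h_1,\ldots,h_k > 0$, alternating in sign so that the signed sum equals $h$; in particular $\sum_i h_i \geq h$. For $y \geq 1$ this gives $y^h \leq \prod_i y^{h_i}$, so summing over all such decompositions yields a bound of the form $U_n(y) \leq e^{O(\sqrt{n}\,\log n)} B_n(y)$, whence $z_u(y) \geq z_b(y)$.

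The main obstacle is the Hammersley--Welsh step: one must check that tracking the $y$-weight through the decomposition does not upset the subexponential combinatorial factor. A subsidiary subtlety in the second step is the requirement that $z_c$ lie inside the radius of convergence of $I(\cdot, y)$, which amounts to the (expected but nontrivial) fact that irreducible bridges grow sufficiently slowly. Both difficulties can be sidestepped by the route foreshadowed in the title of Section~\ref{sec:from_ms}: reinterpreting Madras and Slade's analysis of the SAW two-point function, in which the exponential decay rate of the generating function for SAWs terminating at $(0,\ldots,0,h)$ is characterised by precisely the equation $I(z,y) = 1$, handling both intermediate equalities at one stroke.
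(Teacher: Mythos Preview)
Your final sentence is exactly the paper's proof: it quotes Madras--Slade's renewal identity (their equation 4.2.15), which in the present notation reads $I(z,e^{m(z)})=1$ for $z\in(0,z_c)$, combines it with $m(z)=\log y_u(z)$, and uses the inverse relationship between $y_u$ and $z_u$ to get $I(z_u(y),y)=1$ directly. The paper never needs $z_u=z_b$ as an input; that equality falls out afterwards as a corollary.

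The bulk of your proposal, however, is the alternative ``direct'' route, and there the second step has a genuine gap that you have slightly misdiagnosed. You do correctly deduce $I(z,y)<1$ for $z<z_b(y)$ and hence (by monotone convergence for non-negative series) $L(y)=I(z_b(y),y)\leq 1$. But to force $L(y)=1$ you need to exclude the scenario in which $z_b(y)$ is already a singularity of $I(\cdot,y)$ with $L(y)<1$; in that case $B=I/(1-I)$ inherits the singularity from $I$, not from the vanishing of the denominator, and there is no contradiction. What is actually required is the strict inequality $\rho(y)>z_b(y)$ for the radius of convergence $\rho(y)$ of $I(\cdot,y)$, and this is precisely the nontrivial Ornstein--Zernike-type input that the Madras--Slade renewal analysis provides. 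Your stated sufficient condition --- that $z_c$ lie inside the radius of $I(\cdot,y)$ --- is stronger than needed and in fact should fail for $y>1$, since increasing $y$ only increases the coefficients and hence shrinks the radius. The inequality $I(z,y)\geq yI(z,1)$ does not rescue the argument either: it yields $z_b(y)\leq z^*$ where $yI(z^*,1)=1$, so $yI(z_b(y),1)\leq 1$, which is the wrong direction.

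Your Hammersley--Welsh step for $z_u(y)=z_b(y)$, on the other hand, is correct and is a nice observation in its own right: since the height $h$ of the original walk is at most the first span $s_1\leq\sum_i s_i=H$, one has $y^h\leq y^H$ for $y\geq 1$, and the usual subexponential partition count survives unchanged, giving $U_n(y)\leq e^{O(\sqrt{n})}B_n(y)$. The paper does not use this; it obtains $z_u=z_b$ only after the lemma is proved.
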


Note that this lemma implies that $B(z,y)$ diverges as $z\to e^{-\lambda(y)} = z_u(y)$, and hence $z_b(y) \leq z_u(y)$ for $y\geq1$. So the fact that $z_u(y) = z_b(y)$ is an immediate corollary.

In this section we refer heavily to Madras and Slade~\cite[\S4.2-4.3]{Madras1993SelfAvoiding}, though with modified notation. Define $G(z;a\to b)$ to be the generating function of SAWs starting at vertex $a$ and ending at vertex $b$. Let $\mathbf 0$ be the origin and  $p_h = (0,\ldots,0,h)$ be the point on the positive $\mathbf z$ axis at distance $h$ from $\mathbf 0$. Madras and Slade (equation 4.1.1) define the \emph{mass} $m(z)$ to be
\begin{equation}\label{eqn:defn_mass}
m(z) = \liminf_{h\to\infty} -h^{-1}\log G(z;\mathbf 0 \to p_h).
\end{equation}
They show (Proposition 4.1.1) that $m(z)$ is a concave function of $\log z$ for $z>0$; that it is finite, strictly positive and continuous on $(0,z_c)$; and that $m(z) = -\infty$ for $z>z_c$. Corollaries 4.1.15 and 4.1.16 give $\lim_{z\to z_c^-} m(z) = 0$, $\lim_{z\to0^+}=+\infty$, and that $m(z)$ is strictly decreasing on $(0,z_c)$. It is also shown (Theorem 4.1.3) that the RHS of~\eqref{eqn:defn_mass} is in fact a limit.

Now define $B(z;a\to b)$ to be the generating function of self-avoiding bridges from point $a$ to point $b$, and define $\mathcal P_h$ to be the set of all vertices with $\mathbf z$-coordinate $h$. Then (Proposition 4.1.8 and Corollary 4.1.17) we also have
\begin{align}
m(z) &= \lim_{h\to\infty}-h^{-1} \log G(z;\mathbf 0 \to \mathcal P_h)\notag \\
&= \lim_{h\to\infty}-h^{-1} \log B(z;\mathbf 0 \to p_h)\notag \\
&=\lim_{h\to\infty}-h^{-1} \log B(z;\mathbf 0 \to \mathcal P_h)\notag
\end{align}
where we have generalized $G$ and $B$ to count walks from a point to a plane in the obvious way.

If we define $U(z;a\to b)$ to be the generating function of upper half-space walks from $a$ to $b$ (assuming that $a$ and $b$ are both in the upper half-space) then by inclusion
\begin{equation}\label{eqn:upper_hs_mz}
\lim_{h\to\infty}-h^{-1} \log U(z;\mathbf 0 \to p_h) = \lim_{h\to\infty}-h^{-1} \log U(z;\mathbf 0 \to \mathcal P_h) = m(z).
\end{equation}

Define $U(z,y)$ as in~\eqref{eqn:Uzy_defn}, but now consider it as a series in $y$ with coefficients in $\mathbb Z[[z]]$. For any positive $z$ it has a radius of convergence $y_u(z)$. Then by~\eqref{eqn:upper_hs_mz}, the continuity of $\log$ and basic limit laws, it follows that
\begin{equation}\label{eqn:mz_rc}
m(z) = \log y_u(z) \qquad \iff \qquad y_u(z) = e^{m(z)}.
\end{equation}
So $y_u(z)$ is continuous and strictly decreasing on $(0,z_c)$, from $+\infty$ as $z\to 0^+$ to $1$ as $z\to z_c^-$. It thus has an inverse in this region; the only possible candidate is $z_u(y)$. We can then conclude that $z_u(y)$ increases to $z_c$ as $y\to1^+$. Since $z_u(y)$ is a non-increasing function of $y$, we then have Theorem~\ref{thm:yc1}.

To prove Lemma~\ref{lem:fe_I1}, we must first define one more function. Let $I(z;a\to b)$ be the generating function of irreducible bridges from point $a$ to point $b$. 

\begin{proof}[Proof of Lemma~\ref{lem:fe_I1}]
Equation 4.2.15 of~\cite{Madras1993SelfAvoiding} is, in our notation,
\begin{equation}\label{eqn:sum_I_1}
\sum_{h=1}^\infty I(z;\mathbf 0\to \mathcal P_h)e^{h\cdot m(z)} = \sum_{h=1}^\infty I(z;\mathbf 0\to \mathcal P_h)(y_u(z))^h = 1
\end{equation}
for $z\in(0,z_c)$. The sum is absolutely convergent; we rewrite it to obtain
\[I(z,y_u(z)) = 1.\]
But for $z\in(0,z_c)$, each point $(z,y) = (z,y_u(z))$ is also $(z_u(y),y)$ for some $y>1$. So $I(z_u(y),y)=1$ for $y\geq 1$ (the $y=1$ case is already well-known). The lemma follows.
\end{proof}

\section{Conclusion}
We have considered a model of self-avoiding walks in a half-space of the hypercubic lattice $\mathbb Z^d$, with one end of each walk attached to the boundary of the half-space and the other end subject to a force acting perpendicularly to the boundary. This force is modelled with a Boltzmann weight $y$ associated with the distance between the endpoint and the boundary. If $n$ is the length of a walk, then the model displays two distinct phases: a free phase, when this distance is on average $o(n)$; and a ballistic phase, when it is $\Theta(n)$. There is a critical value $y_c$ which separates these two phases. We have proved that $y_c=1$. We have also shown that in the ballistic phase, the free energy of the model satisfies a simple equation involving the generating function of irreducible self-avoiding bridges.

There are a number of related results which can be derived using the methodology presented in this paper. These include a more complete picture of the relationships between the radii of convergence of the various generating functions we discuss, a similar result for self-avoiding polygons, and some results which apply to more general models of pulled \emph{adsorbing} walks and polygons. We hope to explore and discuss these results further in a later publication.

\section*{Acknowledgments}
The author thanks Tony Guttmann, Stu Whittington, Buks van Rensburg and Mireille Bousquet-M\'elou for fruitful discussions, and the anonymous referees on a previous version of this paper. This research was partly supported by the ARC Centre of Excellence for Mathematics and Statistics of Complex Systems (MASCOS) and the PIMS Collaborative Research Group in Applied Combinatorics.

\bibliographystyle{amsplain}
\bibliography{yc1_paper}

\end{document}